\documentclass[11pt]{article}
\usepackage{amsmath,amssymb}
\usepackage{amsthm}
\usepackage{graphicx}
\usepackage{subfig}
\usepackage{float}
\usepackage[justification=centering]{caption}

\topmargin -1cm \textwidth 18cm \textheight 24.5cm
\hoffset=-2.88cm \topmargin = -2.10cm

\newtheorem{theorem}{Theorem}[section]

\newtheorem{assumption}{Assumption}[section]
\newtheorem{remark}{Remark}[section]

\def\triangleq{{:=}}

\begin{document}
\title{\textbf{Optimal Tracking Performance of Control Systems with\newline
Two-Channel Constraints\footnotemark[1]} \vspace{1cm}}
\author{\textrm{\textbf{Chao-Yang~Chen$^{a,b}$, Bin~Hu$^{b}$\footnotemark[2], Zhi-Hong~Guan$^{b}$\footnotemark[3], Ming~Chi$^{b}$, Ding-Xin He$^{b}$}} \\
{\small \textsl{$^{a}$School of Information and Electrical Engineering, Hunan University of Science and Technology,}}\\
{\small \textsl{Xiangtan, 411201, P.R.China}}\\
{\small \textsl{$^b$
College of Automation, Huazhong University of Science and Technology,}}\\
{\small \textsl{Wuhan, 430074, P. R. China}}\\
}

\renewcommand{\thefootnote}{\fnsymbol{footnote}}
\footnotetext[1]{This work was partially supported by the National Natural Science Foundation of China under Grants 61503133, 61473128 and 61572208, and the Postdoctoral Science Foundation of China under Grants 2016M592449 and 2015M582224.}
\footnotetext[2]{Corresponding author. E-mail: hubinauto@mail.hust.edu.cn (B. Hu).}
\footnotetext[3]{Corresponding author. E-mail: zhguan@mail.hust.edu.cn (Z.-H. Guan).}

\date{}
\maketitle

\begin{abstract}
This paper focuses on the tracking performance limitation for a class of networked control systems(NCSs) with two-channel constraints. In communication channels, we consider bandwidth, energy constraints and additive colored Gaussian noise(ACGN) simultaneously. In plant, non-minimal zeros and unstable poles are considered; multi-repeated zeros and poles are also applicable. To obtain the optimal performance, the two-parameter controller is adopted. The theoretical results show that the optimal tracking performance is influenced by the non-minimum phase zeros, unstable poles, gain at all frequencies of the given plant, and the reference input signal for NCSs. Moreover, the performance limitation is also affected by the limited bandwidth, additive colored Gaussian noise, and the corresponding multiples for the non-minimum phase zeros and unstable poles. Additionally, the channel minimal input power constraints are given under the condition ensuring the stability of the system and acquiring system performance limitation. Finally, simulation examples are given to illustrate the theoretical results. \vspace{1em}

\noindent \textbf{Keywords}\quad ACGN; Bandwidth constraint; Channels energy constraint; Performance limitation.
 \vspace{1em}
\end{abstract}

\section{Introduction}

Owning to the advantages of the NCSs over the traditional real-time control systems in information processing and decision-making, control and optimization of NCSs are rapidly developed and broadly applied \cite{ChenQiu13,Cuenca15,Guo10,Guo12,HuGuan16,Tang13}. However, system performance could be deteriorated, even leading to instability of the control plants, due to the limitations of the channel bandwidth \cite{Guan13Optimal,Rojas09}, channel capacity \cite{Braslavsky07,Freudenberg11,Rojas09}, delays \cite{Xu11,Qu16}, quantization \cite{Azuma12,Qu16}, congestions \cite{Feng14} and packet loss \cite{Bu13,Jiang16,Qiu15} and fault \cite{WangWang16} in the communication channels of NCSs. Therefore, the analysis and design of NCSs are difficult and challenging.
\par
The researches on performance of the control system attract a growing amount of interest in the control community, take \cite{Bakhtiara08,Ding10,GuanHuang16,Li15} as examples. The above literatures mainly focus on minimizing tracking error by designing optimal controllers. The objective of this paper is to reveal the quantitative relationship between the intrinsic properties of NCSs and the tracking performance limitation via feedback control. The researches on the performance of NCSs mainly focus on two aspects. On one hand, by invoking the information theory, the relationship between information entropy and system performance is studied, such as \cite{Martins08,Shingin12}. On the other hand, by using Bode and Poisson integral, another branch reveals that the performance of the close-loop systems is fundamentally constrained by the intrinsic properties of the system, such as \cite{Bakhtiara08,Ding10,Guan13Optimal,Jiang12,Rojas08}. By importing appropriate entropies and distortions, \cite{Shingin12} investigates the performance limitation for scalar systems with Gaussian disturbances, which implies that the achievable performance cannot be improved even if the maximum information constraint is relaxed to an average information constraint. \cite{Freudenberg11} discusses a lower bound on the achievable performance in a finite time and shows that this bound can be achieved by using linear strategies. \cite{Silva13} studies the control problems for discrete-time single-input linear time-invariant plants over a signal-to-noise ratio constrained channel. In \cite{Ding10}, by presenting the performance index constructed by tracking error energy, the authors investigate the optimal tracking of the NCSs with the down-link AWGN network channels. \cite{Jiang12} considers the disturbance attenuation performance to minimize the variance of the plant output in response to a Gaussian disturbance over an AWGN channel. In \cite{Guan13Optimal}, optimal tracking performance issues are studied for NCSs in the up-link channel with limited bandwidth and additive colored Gaussian noise channel.
\par
It is noted that those results above provide useful guidelines in the design of NCSs, including the design of communication channels. However, it is shown in \cite{Ding10,Guan13Optimal,Jiang15,Rojas09} that in order to obtain the optimal tracking performance, only up-link or down-link channel model is considered in the communication channel, while two-channel is often encountered in practice. In fact, there are two cases. In the first case, both the system sensor and the controller are far away from the plant. In the other case, only the controller is far away from the plant and the system sensor. The adopted model can be found in many real-world systems. For example, in the telemedicine system of robot-assisted neurosurgery, patient and robot are respectively the plant and the controller. The remote expert obtains information via the network transmission, and the instruction of the expert is then sent back to the robot via the network transmission. In addition, for leader-follower multi-agent systems \cite{HuFeng10}, provided that the position, velocity and direction information of a leader are considered as the reference signal, the controller is designed to achieve the minimal tracking error between the leader and the follower. However, owing to the structural characteristics of the follower and the communication constraint between the leader and the follower, the minimal tracking error cannot be zero. Thus the study of the relationship among the tracking performance, structural characteristics of followers and communication parameters (bandwidth and noise in this paper) will give some guidance for leader-follower multi-agent systems (such as unmanned aerial vehicle formation systems and multi-robot system) on how to achieve consensus tracking, including static consensus and dynamic consensus. Moreover, the optimal performance for two-channel communication channels is worthy of careful study in the model of NCSs. Better performance can be obtained by using a more flexible two-parameter controller\cite{Guan12Optimal}. Moreover, with the development of science and technology, two-parameter controller is also frequently used in practice in terms of aerospace\cite{Nagashio}, robotics\cite{Bingul}, power systems\cite{Campos}, etc. Meanwhile, the channel input of NCSs is often required to have an infinite power for the optimal tracking problem in \cite{Bakhtiara08,Ding10,Guan13Optimal}, which generally cannot be met in practice. Additionally, communication constraints for bandwidth and additive colored Gaussian noise should be included in the communication model, which is more realistic than the corresponding models presented in \cite{Braslavsky07,Ding10,Jiang12}. As in the real world, many practical systems resort to random reference signals. Examples include a jolting of a warship in the surf, a communication interference noise, a random fluctuation generated by turbulence for the flying missile, and a real-time random-noise tracking radar\cite{Rappaport1967,Zhang2004}. More information can refer to \cite{Ding10,Guan13Optimal,Jiang12}.
\par
The main goal of the present work is to adopt two-parameter controllers to investigate the best achievable tracking performance of networked control systems with two-channel constraints and the finite channel input power. This paper investigates the optimal tracking performance under bandwidth-limited, energy constraints and ACGN. The plant is described by the unstable and non-minimum phase system with multi-repeated poles and zeros. The reference signal is considered as random reference signals. The contributions of this paper can be summarized as follows. First, we consider both up-link and down-link channels with interference, which is more practical than most existing literatures which focus on either up-link or down-link channel models. Second, some fundamental constraints are incorporated  in the communication channels, including bandwidth, ACGN and channel input power. Third, considering that the channel input energy cannot be infinite in the real-world communication channels, this paper constructs a novel performance index, which can quantificationally characterize the properties of the tracking capability and the communication ability. Finally, the channel minimal input power constraints are given under the condition ensuring the stability of the system and acquiring system performance limitation.
\par
The rest of the paper is organized as follows. The problem formulation and preliminaries are given in section \ref{se2}. In section \ref{se3}, the main results of this paper are presented. We then proceed in Section \ref{se3} to formulate and solve the problem of optimal tracking for two-channel with bandwidth, energy constraints and ACGN. In Section \ref{se4}, some illustrative numerical examples are given. The conclusion is finally stated in Section \ref{se5}.
\par
The notation used throughout this paper is described as follows. For any complex number $z$, its complex conjugate is denoted by $\bar{z}$ . The transpose and conjugate transpose of a vector u are denoted by $u^T$ and $u^H$ respectively. The transpose and conjugate transpose of a matrix are denoted by $A^T$ and $A^H$, respectively. All the vectors and matrices in this paper are assumed to have compatible dimensions.

\section{Problem Formulation}\label{se2}

In this paper, we consider the NCSs depicted in Fig.1, where up-link and down-link channels are affected by the limited bandwidth and ACGN. Other communication constraints are not taken into account and are assumed to be ideal.
\begin{figure}[ht]
\centering
\begin{minipage}[c]{0.8\textwidth}
\centering
  \includegraphics[width=6.5cm]{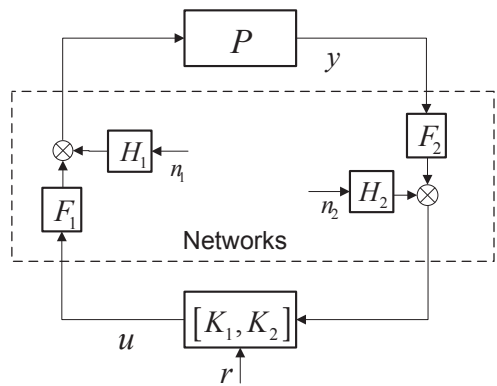}
  \caption{ Control system model of two-parameter with bandwidth-constrained additive color Gaussian noise.}\label{Fig.1}
\end{minipage}
\end{figure}
In this setup, $P$ denotes the plant model and $[K_1,K_2]$ the two-parameter controller. The down-link communication channel is characterized by three parameters: the channel transfer functions $F_1(s), H_1(s)$, and the channel noise $n_1(s)$. The up-link communication channel is considered similarly, in which $F_2(s), H_2(s)$ are the transfer functions, and $n_2(s)$ is the channel noise. The channel transfer functions $F_1(s)$ and $F_2(s)$ modeling the bandwidth limitation are assumed to be stable and nonminimum phase (NMP). $F_1(s), F_2(s)\in{\mathbb{R}H_\infty}$ have $N_{f1}$ and $N_{f2}$ NMP zeros, respectively. The channel transfer functions $H_1(s), H_2(s)\in{\mathbb{R}H_\infty}$, color the additive Gaussian noise. The communication channel additive noise process $n_i(k),(i\in\{1,2\})$ is supposed to be a zero mean stationary white Gaussian noise process, with power spectral density $\sigma_{i}^2$. The signals $r$ and $y$ are the reference input and the system output, respectively.
The reference input is a wide-sense stationary random process satisfying $r(t)=C, -\infty<t<\infty$, where $C$ is a random variable with $E[C^2]=\sigma^2$. Therefore, the power spectral density of $r(t)$ is given by $S_r(\omega)=2\pi\sigma^2\delta(\omega)$, where $\delta(\omega)$ denotes the Dirac delta function. Throughout this paper, symbols ${\bf{r},{y},{u},{n}_1}$ and ${\bf{n}_2}$ are the $\mathcal{L}$-transforms of signals $r(t), y(t), u(t), n_1(t)$ and $n_2(t)$, respectively.
\begin{assumption}\label{as2}
The system reference input $r$, channel noises $n_1$ and $n_2$ are uncorrelated.
\end{assumption}
Assumption \ref{as2}, is a common assumption and is adopted in \cite{Ding10,Jiang12,Guan13Optimal}. It can be relaxed at the expense of more complex expressions.
\par
For the given reference signal $r$, a tracking error of the system is defined as
\begin{equation}\label{eq1}
    e=r-y.
\end{equation}
The channel input is required to satisfy the power constraints
\begin{equation}\label{eqa1}
E\{\left\| {\bf u} \right\|_2^2\}<{\Gamma _u}~\text{and}~ E\{\left\| {\bf y} \right\|_2^2\} < {\Gamma _y},
\end{equation}
where $\Gamma _u$ and $\Gamma _y$ are the control input channel power constraint and the system output channel power constraint, respectively.
\begin{remark}
If the noise variance is zero or the input is unconstrained, the capacity of the channel is infinite. The common limitation on the input is a channel energy or power constraint, which can be assumed as an average power constraint\cite{Cover2012}.
Similar to the literature \cite{Braslavsky07}, we define the control input channel power and the system output channel power $E\{\left\| {\bf u} \right\|_2^2\}$ and $E\{\left\| {\bf y} \right\|_2^2\}$, respectively.
\end{remark}
A power constraint such as (\ref{eqa1}) may arise either from electronic hardware limitations or from regulatory constraints introduced to minimize interference to other communication system users.
\par

The comprehensive performance of the system is defined as
\begin{align}\label{eq2}
J \triangleq &{\varepsilon _1}E\left\{ {\left\| {\bf e} \right\|_2^2}\right\}
+ {\varepsilon _2} E\left\{\left\| {\bf u} \right\|_2^2 - {\Gamma _u}\right\} + {\varepsilon _3} E\left\{\left\| {\bf y} \right\|_2^2 - {\Gamma _y} \right\}.
\end{align}
\begin{remark}
In this performance index (\ref{eq2}), the first part reflects the tracking performance of the system, the second and third parts reflect the performance of channel communications. Therefore, by invoking these three parts tradeoff in the performance index, we can characterize the properties of the tracking capability and the communication ability. Compared with \cite{Ding10,Guan12Optimal} which only considered the tracking error energy, the performance index can better reflect communication capabilities and system tracking capabilities by weighted additional performance index (\ref{eq2}). Additionally, the proportion can be adjusted according to actual needs by weight factors.
\end{remark}
The optimal tracking performance is measured by the possible minimal tracking error achievable by all possible linear stabilizing controllers (denoted by $U$), determined as
$${J^*} = \mathop {\inf }\limits_{K \in U} J.$$
Next we introduce some important factorizations that will be frequently used in the development of the result. First, let the coprime factorization of ${F_2}P{F_1}$ be given by
\begin{equation}\label{eq3}
{F_2}P{F_1} = N{M^{-1}},
\end{equation}
where $N, M\in{\mathbb{R}H_\infty}$ and satisfy the Bezout identity
\begin{equation}\label{eq4}
XM-YN=1,
\end{equation}
for some $X, Y\in{}\mathbb{R}H_\infty$. Owing to channel transfer functions $F_1(s)$ and $F_2(s)$ are stable and NMP transfer functions, the coprime factorization of $P$ can be given by
$P=\hat{N}M^{-1},$
where $\hat{N}\in{\mathbb{R}H_\infty}$.
It is useful to factorize $\hat{N}, F_1, F_2, N$ and $M$ as
\begin{align}\label{eq6}
\begin{array}{l}
\hat{N}=L_{g}N_0,~F_1=L_{f_1}{F_{1_0}},~F_2=L_{f_2}{F_{2_0}},
N=LN_m=LF_{1_0}F_{2_0}N_0,~M=BM_m,
\end{array}
\end{align}
where $N_0(s), N_m(s)$ and $M_m(s)$ are the minimum phase transfer functions. It is easy to see that $N$ contain the NMP poles of the plant $P$ and transfer functions $F_1$  and $F_2$, but $\hat{N}$ only contain the NMP poles of the plant $P$. And $L(s), B(s)$ represent all-pass factor which can be constructed as
\begin{align*}
L(s)=&\prod_{i=1}^{N_z+N_{f_1}+N_{f_2}}\limits\left(\frac{s-z_{i}}{s+\bar{z}_{i}}\right)^{m_i}\\
=&\prod_{j=1}^{N_z}\limits\left(\frac{s-z_{j}}{s+\bar{z}_{j}}\right)^{m_j}
\prod_{k=N_z+1}^{N_z+N_{f_1}}\limits\left(\frac{s-z_{i}}{s+\bar{z}_{i}}\right)^{m_k}
\prod_{l=N_z+N_{f_1}+1}^{N_z+N_{f_1}+N_{f_2}}\limits
\left(\frac{s-z_{l}}{s+\bar{z}_{l}}\right)^{m_l},\\
B(s)=&\prod_{i=1}^{N_p}\limits\left(\frac{s-p_i}{s+\bar{p}_i}\right)^{n_i},
\end{align*}
where $z_{i},(i=1, \cdots ,N_z+N_{f_1}+N_{f_2})$ are the non-minimum phase zeros of $F_2PF_1$.
$z_{j}, z_{k},$ and $ z_{l},$ $(j = 1, \cdots ,N_z;\,k = N_z+1, \cdots ,N_z+N_{f_1};\,l = N_z+N_{f_1}+1, \cdots ,N_z+N_{f_1}+N_{f_2})$,  are the non-minimum phase zeros of $P, F_1, F_2$, respectively. $p_i, (i = 1, \cdots ,N_p)$ are the unstable poles of $P$. $m_{j}, m_{k},$ and $ m_{l},$ $(j = 1, \cdots ,N_z;\,k = N_z+1, \cdots ,N_z+N_{f_1};\,l = N_z+N_{f_1}+1, \cdots ,N_z+N_{f_1}+N_{f_2})$, are the corresponding non-minimum phase zeros multiplicity $P, F_1$, and $F_2$. $n_i (i = 1, \cdots ,N_p)$ are the corresponding unstable poles' multiplicity $P$. It is well-known that any stabilizing compensator $K$ can be described via the so-called Youla parameterization\cite{Vidyasagar}. Then, the set of all stabilizing compensators $K$ is characterized by the set\cite{Ding10,Guan13Optimal}
\begin{align}
\mathcal{K}:=\Big\{K: K=\big[K_1~K_2\big]=(X-RN)^{-1}
\big[Q~~Y-RM\big],~Q,R\in{}\mathbb{R}H_\infty\Big\}.
\end{align}
\begin{remark}
When the up-link channel and down-link channel are subject to communication constraints, the close-loop system constituted by the feedback controller and $F_2PF_1$ is internal stability.
\end{remark}

\section{Tracking performance limitations}\label{se3}

In this section, we study the optimal tracking performance over bandwidth and energy constraint channels with additive colored Gaussian noise, as shown in Figure 1. Our main result in this paper is the following theorem that provides an exact expression on the optimal tracking performance.

\begin{theorem}\label{th1}
Consider the network control system with the structure model shown in Figure 1, assumption 1. Reference signal $r$ is a random variable with zero mean and variance $\sigma_r^2$. Channel noises $n_1$ and $n_2$ are white Gaussian signals. The system $P$ is supposed to be unstable, NMP, strictly proper, transfer function. Denote $z_{i},(i=1,\cdots,N_z)$ and ${p_{i}},\left( {i = 1, \cdots ,{N_p}} \right)$ are the unstable poles and NMP zeros of the system $P$, respectively. Suppose $F_1$ and $F_2$ are NMP transfer functions. Denote $z_i,(i=N_z+1,\cdots,N_z+N_{f_{1_i}})$ and $z_i,(i=N_z+N_{f_{1_i}}+1,\cdots,N_z+N_{f_{1_i}}+N_{f_{2_i}})$, are the NMP zeros of the transfer functions $F_1$ and $F_2$.\par
Then
\begin{align*}
{J^*}=& 2{\varepsilon _1}{\sigma_r ^2}{\sum\limits_{i = 1}^{{N_z+N_{f_1}}} {{\mathop{\rm Re}\nolimits} \left\{ {{z_{i}}} \right\}}} + \sum\limits_{i = 1}^{N_{p}}\sum\limits_{d = 1}^{n_{i}} {\frac{r_{pid}}{(d - 1)!}}\sum\limits_{j = 1}^{N_{p}}\sum\limits_{k = 1}^{n_{j}} {\frac{{\rm{d}}^{d-1}}{{\rm{d}}{s}^{d-1}}\frac{(-1)^{k-1}{\bar{r}_{pjk}}}{(s + {\bar{p}_{j}})^k}}\Big|_{s=p_i}\\
&
\sum\limits_{i = 1}^{N_{z}+N_{f_1}+N_{f_2}}\sum\limits_{d = 1}^{m_{i}} {\frac{r_{zid}}{(d - 1)!}}\sum\limits_{j = 1}^{N_{z}+N_{f_1}+N_{f_2}}\sum\limits_{k = 1}^{m_{j}}
\frac{{\rm{d}}^{d-1}}{{\rm{d}}{s}^{d-1}}{\frac{(-1)^{k-1}{\bar{r}_{zjk}}}{(s + {\bar{z}_{j}})^k}}\Big|_{s=z_i}\\
&+ \sum\limits_{i = 1}^{N_{s}}\sum\limits_{d = 1}^{o_{i}} {\frac{r_{sid}}{(d - 1)!}}\sum\limits_{j = 1}^{N_{s}}\sum\limits_{k = 1}^{o_{j}}{\frac{{\rm{d}}^{d-1}}{{\rm{d}}{s}^{d-1}}\frac{(-1)^{k-1}{\bar{r}_{sjk}}}{(s + {\bar{s}_{j}})^k}}\Big|_{s=s_i}\nonumber\\
&+ \Big\|(I-\Delta_i\Delta_i^H)\left( {\begin{array}{*{20}{c}}
{{\Gamma _1}}\\
{{\Gamma _2}}
\end{array}} \right)\Big\|_2^2
+\sigma_r^2\Upsilon_1 - {\varepsilon _2}{\Gamma _u} - {\varepsilon _3}{\Gamma _y},
\end{align*}
where
\begin{align*}
r_{z\,id}&= \frac{{\sigma _1}\sqrt {{\varepsilon _1} + {\varepsilon _3}} }{(m_i-d)!}\frac{{\rm{d}}^{m_i-d}}{{\rm{d}}s^{m_i-d}}\Big((s-z_{i})^{m_i}
{N_0}(s){H_1}({s})M^{-1}(s){L^{ - 1}}(s)\Big)\Big|_{s=z_{i}},\\
r_{p\,id}&= \frac{-1}{(n_i-d)!}\frac{{\rm{d}}^{n_i-d}}{{\rm{d}}s^{n_i-d}}\Big((s-p_{i})^{n_i}
\Omega_o ({s})N^{-1}(s){B^{ - 1}}(s)\Big)\Big|_{s=p_{i}},\\
r_{s\,id}&= \frac{1}{(o_i-d)!}\frac{{\rm{d}}^{o_i-d}}{{\rm{d}}s^{o_i-d}}\Big((s-s_{i})^{o_i}
\Delta _i^{H}(s)\left( {\begin{array}{*{20}{c}}
{{\Gamma _1(s)}}\nonumber\\
{{\Gamma _2(s)}}
\end{array}} \right)\Big)\Big|_{s=s_{i}},\\
\Upsilon_1&=\left\|\begin{array}{*{20}{c}}
\sqrt{\varepsilon_1}(I-\varepsilon_1F_{1_0}N_0\Lambda_0^{-1}\Lambda_0^{-H}N_0^HF_{1_0}^H)\\
\sqrt{\varepsilon_2}M_m\Lambda_0^{-1}\Lambda_0^{-H}N_0^HF_{1_0}^H\\
\sqrt{\varepsilon_2}F_{1_0}N_0\Lambda_0^{-1}\Lambda_0^{-H}N_0^HF_{1_0}^H
\end{array}\right\|_2^2.
\end{align*}
\end{theorem}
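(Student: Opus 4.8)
The plan is to convert the constrained optimal-tracking problem into a standard vector $\mathcal{H}_2$ optimization by the Youla parameterization, and then evaluate the optimum through an orthogonal (inner--outer) decomposition and residue calculus. First I would read off from Figure 1 the closed-loop maps from the three exogenous inputs --- the reference $\mathbf{r}$ and the channel noises $\mathbf{n}_1,\mathbf{n}_2$ --- to the three signals $\mathbf{e}$, $\mathbf{u}$, $\mathbf{y}$ appearing in $J$. Substituting $K=(X-RN)^{-1}[Q\ \ Y-RM]$ from the set $\mathcal{K}$ and using the coprime factorization $F_2PF_1=NM^{-1}$ together with the all-pass/minimum-phase splitting $N=LN_m$, $M=BM_m$, each of these maps becomes affine in the free Youla parameters $Q,R\in\mathbb{R}H_\infty$. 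Because $\mathbf{r}$, $\mathbf{n}_1$, $\mathbf{n}_2$ are uncorrelated (Assumption \ref{as2}), applying Parseval's theorem makes $E\{\|\mathbf{e}\|_2^2\}$, $E\{\|\mathbf{u}\|_2^2\}$ and $E\{\|\mathbf{y}\|_2^2\}$ split additively into a reference part, driven by the spectral density $S_r(\omega)=2\pi\sigma_r^2\delta(\omega)$, and two white-noise parts, driven by $\sigma_1^2$ and $\sigma_2^2$, with no cross terms. Collecting the three weighted pieces turns $J$ into a single functional of $(Q,R)$, plus the fixed offset $-\varepsilon_2\Gamma_u-\varepsilon_3\Gamma_y$.

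Second, I would isolate the reference part. Since $S_r$ is a Dirac mass at the origin, the reference acts as a constant (DC) excitation; the achievable tracking-error energy is then limited by the interpolation conditions that the reference-to-error map inherits at the non-minimum phase zeros of the forward path $P$ and $F_1$, and evaluating the corresponding minimal-norm problem produces the Bode--Poisson term $2\varepsilon_1\sigma_r^2\sum_{i=1}^{N_z+N_{f_1}}\mathrm{Re}\{z_i\}$ (the up-link factor $F_2$ does not enter this interpolation, which is why the index stops at $N_z+N_{f_1}$). The remaining reference contribution couples the tracking and channel-power objectives, so completing the square in the $\varepsilon_1$- and $\varepsilon_2$-weighted terms forces the matrix spectral factorization that defines $\Lambda_0$, and the minimized value is exactly $\sigma_r^2\Upsilon_1$.

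Third, for the noise part I would stack the relevant noise-to-signal maps into a vector and extract the inner (all-pass) factors: the scalar Blaschke products $L$ and $B$ carrying the non-minimum phase zeros $z_i$ and unstable poles $p_i$, and the inner matrix $\Delta_i$ dictated by the two-channel structure. Orthogonality of $\mathbb{R}H_\infty$ to its complement under the $\mathcal{H}_2$ inner product lets the vector norm split into a freely assignable part, which the optimal $Q,R$ annihilate, and a fixed part supported on $\{z_i\}$, $\{p_i\}$ and the zeros $\{s_i\}$ of $\Delta_i$. Writing each fixed part through its partial-fraction/residue data $r_{zid}$, $r_{pid}$, $r_{sid}$ and evaluating the $\mathcal{H}_2$ inner products by the residue theorem yields the three double sums; the derivatives $\frac{\mathrm{d}^{d-1}}{\mathrm{d}s^{d-1}}$ and the factorials are precisely the bookkeeping for the multiplicities $m_i$, $n_i$, $o_i$ of the repeated zeros and poles. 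What remains is the projection $\|(I-\Delta_i\Delta_i^H)(\Gamma_1,\Gamma_2)^{T}\|_2^2$ of the power-constraint data onto the orthogonal complement of the range of $\Delta_i$.

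The hard part will be carrying three complications simultaneously. The two-channel structure makes the problem genuinely matrix/vector valued, so I must work with an inner matrix $\Delta_i$ and a matrix spectral factor $\Lambda_0$ rather than scalar all-pass functions, and verify that these factorizations exist and are invertible on the imaginary axis. The index $J$ mixes tracking, input power and output power, so the completion of the square couples terms carrying the different weights $\varepsilon_1,\varepsilon_2,\varepsilon_3$ inside one factorization. And the repeated poles and zeros turn every residue pairing into a higher-order derivative. The most delicate point is checking that the cross terms --- between the reference and noise contributions, and between the freely assignable and fixed parts of the $\mathcal{H}_2$ norm --- genuinely vanish, so that the displayed expression is simultaneously a lower bound for $J$ and attained by an admissible $Q,R\in\mathbb{R}H_\infty$.
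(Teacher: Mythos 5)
Your proposal follows essentially the same route as the paper's own proof: Youla parameterization making $J$ affine in $(Q,R)$, additive splitting of the reference ($Q$-dependent) and noise ($R$-dependent) contributions via Assumption \ref{as2}, inner--outer factorizations (the paper's $\Delta_i\Delta_0$ for the stacked two-channel noise terms and $\Lambda_i\Lambda_0$ for the weighted reference terms), partial-fraction/residue evaluation of the orthogonal parts at $\{z_i\},\{p_i\},\{s_i\}$ with multiplicity bookkeeping, and explicit optimal choices of $R$ and $Q$ yielding the projection term and $\sigma_r^2\Upsilon_1$. Your identification of why the Poisson-type sum stops at $N_z+N_{f_1}$ (only $F_1\hat N$, not $F_2$, enters the reference-to-error all-pass factor) matches the paper's computation, so the sketch is a faithful outline of the published argument.
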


\begin{proof}
The transfer functions from $n_1, n_2$ and $r$ to $u$ and $y$ are formed as
\begin{align*}
{\bf u} &= {K_1}{\bf r} + {K_2}{H_2}{\bf n_2} + {K_2}{F_2}{\bf y},{\rm{   }}\\
{\bf y} &= P\left( {{H_1}{{\bf n}_1} + {F_1}{\bf u}} \right).
\end{align*}
According to the architecture of Figure 1, we can obtain
\begin{align}
\nonumber{\bf y}=& P{\left( {I - {K_2}{F_2}P{F_1}} \right)^{ - 1}}
\left( {{F_1}{K_1}{\bf r} + {H_1}{{\bf n}_1} + {F_1}{K_2}{H_2}{{\bf n}_2}} \right),\\
\nonumber{\bf u}=& {\left( {I - {K_2}{F_2}P{F_1}} \right)^{ - 1}}
\left( {{K_1}{\bf r} + {K_2}{F_2}P{H_1}{{\bf n}_1} + {K_2}{H_2}{{\bf n}_2}} \right),\\
{\bf e}=& \Big(I-P{\left( {I - {K_2}{F_2}P{F_1}} \right)^{ - 1}}{F_1}{K_1}\Big){\bf r}
-P{\left( {I - {K_2}{F_2}P{F_1}} \right)^{ - 1}}\left( { {H_1}{{\bf n}_1} + {F_1}{K_2}{H_2}{{\bf n}_2}} \right).\label{eq7}
\end{align}
From (\ref{eq1}), (\ref{eq3}), (\ref{eq6}) and (\ref{eq7}), the performance index $(\ref{eq2})$ can be written as
\begin{align*}
J &={\varepsilon _1}E\Big( \left\| {SP{H_1}{{\bf n}_1}} \right\|_2^2 + \left\| {SP{F_1}{K_2}{H_2}{{\bf n}_2}} \right\|_2^2
 + \left\| {\left( {I - SP{F_1}{K_1}} \right){\bf r}} \right\|_2^2 \Big)\\
&  + {\varepsilon _2}E\Big( \left\| {S{K_2}{F_2}P{H_1}{{\bf n}_1}} \right\|_2^2
   + \left\| {S{K_2}{H_2}{{\bf n}_2}} \right\|_2^2 + \left\| {S{K_1}{\bf r}} \right\|_2^2 \Big)\\
 &+ {\varepsilon _3}E\Big( \left\| {SP{H_1}{{\bf n}_1}} \right\|_2^2
 + \left\| {SP{F_1}{K_2}{H_2}{{\bf n}_2}} \right\|_2^2
  + \left\| {SP{F_1}{K_1}{\bf r}} \right\|_2^2 \Big) - {\varepsilon _2}{\Gamma _u} - {\varepsilon _3}{\Gamma _y},
\end{align*}
where $S = {\left( {I - {K_2}{F_2}P{F_1}} \right)^{ - 1}}{\rm{ }}$.\par
According to (\ref{eq3})-(\ref{eq7}), the transfer function $S$ has
\begin{align*}
S =& {\left( {I - {K_2}{F_2}P{F_1}} \right)^{ - 1}}{\rm{ }}\\
 =& {\left( {I - {K_2}N{M^{ - 1}}} \right)^{ - 1}}\\
 =& M{\left( {M - {{\left( {X - RN} \right)}^{ - 1}}\left( {Y - RM} \right)N} \right)^{ - 1}}\\
 =& M{\left( {\left( {X - RN} \right)M - \left( {Y - RM} \right)N} \right)^{ - 1}}\left( {X - RN} \right)\\
 =& M{\left( {XM - YN} \right)^{ - 1}}\left( {X - RN} \right)\\
 =& M\left( {X - RN} \right).
\end{align*}
Then,
\begin{align}
J =&\nonumber \left( {{\varepsilon _1} + {\varepsilon _3}} \right)\left( {E\left\| {SP{H_1}{{\bf n}_1}} \right\|_2^2 + E\left\| {SP{F_1}{K_2}{H_2}{{\bf n}_2}} \right\|_2^2} \right)
 + {\varepsilon _2}\Big( E\left\| {S{K_2}{F_2}P{H_1}{{\bf n}_1}} \right\|_2^2
 + E\left\| {S{K_2}{H_2}{{\bf n}_2}} \right\|_2^2 \Big)\\
 &\nonumber - {\varepsilon _2}{\Gamma _u} - {\varepsilon _3}{\Gamma _y} + {\varepsilon _1}E\left\| {\left( {I - SP{F_1}{K_1}} \right){\bf r}} \right\|_2^2
 + {\varepsilon _2}E\left\| {S{K_1}{\bf r}} \right\|_2^2
 + {\varepsilon _3}E\left\| {SP{F_1}{K_1}{\bf r}} \right\|_2^2\\
=&\nonumber \left( {{\varepsilon _1} + {\varepsilon _3}} \right)\Big( \left\| {\hat N\left( {X - RN} \right){H_1}{\sigma _1}} \right\|_2^2+ \Big\| {F_1}\hat N( Y
 - RM ){H_2}{\sigma _2} \Big\|_2^2 \Big)
 + {\varepsilon _2}\Big( \left\| {{F_2}\hat N\left( {Y - RM} \right){H_1}{\sigma _1}} \right\|_2^2\\
 &\nonumber + \Big\| {M\left( {Y - RM} \right){H_2}{\sigma _2}} \Big\|_2^2 \Big) + {\varepsilon _2}\Big\| {MQ\sigma_r } \Big\|_2^2
 + {\varepsilon _1}\Big\| {\left( {I - {F_1}\hat NQ} \right)\sigma_r } \Big\|_2^2 + {\varepsilon _3}\Big\| {{F_1}\hat NQ\sigma_r } \Big\|_2^2
 - {\varepsilon _2}{\Gamma _u} - {\varepsilon _3}{\Gamma _y}\\
=\nonumber & \left\| {{\sigma _1}\sqrt {{\varepsilon _1} + {\varepsilon _3}} {N_0}{H_1}\left( {X - RN} \right)} \right\|_2^2
 + \left\| {\left( {\begin{array}{*{20}{c}}
{{\sigma _2}\sqrt {{\varepsilon _1} + {\varepsilon _3}} {F_{{1_0}}}{N_0}{H_2}}\\[5pt]
{{\sigma _1}\sqrt {{\varepsilon _2}} {F_{{2_0}}}{N_0}{H_1}}\\[5pt]
{{\sigma _2}\sqrt {{\varepsilon _2}} {M_m}{H_2}}
\end{array}} \right)\left( {Y - RM} \right)} \right\|_2^2\\
&\nonumber + \left\|\left(\begin{array}{*{20}{c}}I\\0\\0\end{array}\right)+\left( {\begin{array}{*{20}{c}}
{ - {F_1}{\hat{N}}}\\
{\sqrt {\frac{{{\varepsilon _2}}}{{{\varepsilon _1}}}} {M_m}}\\[5pt]
{\sqrt {\frac{{{\varepsilon _3}}}{{{\varepsilon _1}}}} {F_{1_0}}{N_0}}
\end{array}} \right)Q \right\|_2^2{\varepsilon _1}{\sigma_r ^2}
 - {\varepsilon _2}{\Gamma _u} - {\varepsilon _3}{\Gamma _y}\\
\buildrel \Delta \over =& {J_1} + {J_2} + {J_3},\label{eq8}
\end{align}
where
\begin{align*}
J_1=&\left\| {{\sigma _1}\sqrt {{\varepsilon _1} + {\varepsilon _3}} {N_0}{H_1}\left( {X - RN} \right)} \right\|_2^2,\\
J_2=&\left\| {\left( {\begin{array}{*{20}{c}}
{{\sigma _2}\sqrt {{\varepsilon _1} + {\varepsilon _3}} {F_{1_0}}{N_0}{H_2}}\\[5pt]
{{\sigma _1}\sqrt {{\varepsilon _2}} {F_{2_0}}{N_0}{H_1}}\\[5pt]
{{\sigma _2}\sqrt {{\varepsilon _2}} {M_m}{H_2}}
\end{array}} \right)\left( {Y - RM} \right)} \right\|_2^2,\\
J_3=&\left\|\left(\begin{array}{*{20}{c}}I\\0\\0\end{array}\right)+\left( {\begin{array}{*{20}{c}}
{ - {F_1}{\hat{N}}}\\
{\sqrt {\frac{{{\varepsilon _2}}}{{{\varepsilon _1}}}} {M_m}}\\[5pt]
{\sqrt {\frac{{{\varepsilon _2}}}{{{\varepsilon _1}}}} {F_{1_0}}{N_0}}
\end{array}} \right)Q \right\|_2^2{\varepsilon _1}{\sigma ^2}
 - {\varepsilon _2}{\Gamma _u} - {\varepsilon _3}{\Gamma _y}.
\end{align*}
Let $J_{12}=J_1+J_2$, we have
\begin{align}
J^*=&\inf_{K\in{U}}J\nonumber\\
=&\inf_{R\in{RH_\infty}}(J_1+J_2)+\inf_{Q\in{RH_\infty}}J_3\label{eqq8}\\
\buildrel \Delta \over =&J_{12}^*+J_3^*.\nonumber
\end{align}
where $J_{12}^*=\inf_{R\in{RH_\infty}}(J_1+J_2), J_3^*=\inf_{Q\in{RH_\infty}}J_3$.
\par
Firstly, in order to obtain $J_{12}^*$, we first handle $J_1$ and $J_2$, respectively.
\par
For $J_1$, consider the fact that the term
$${\sigma _1}\sqrt {{\varepsilon _1} + {\varepsilon _3}} {N_0}(s){H_1}(s)X(s){L^{ - 1}}(s),$$
can be decomposed as
\begin{equation}\label{eq9}
{\sigma _1}\sqrt {{\varepsilon _1} + {\varepsilon _3}} {N_0}(s){H_1}(s)X(s){L^{ - 1}}(s) = \Gamma _1^ \bot (s) + {\Gamma _1}(s),
\end{equation}
where ${\Gamma _1}(s)\in {H_2},{\rm{ }}\Gamma _1^ \bot (s)\in H_2^ \bot $, and
\begin{align}
\Gamma _1^ \bot (s) &= \sum\limits_{i = 1}^{N_{z}+N_{f_1}+N_{f_2}}\sum\limits_{d = 1}^{m_{i}} {\frac{{r_{z\,id}}}{(s - {z_{i}})^d}},\nonumber\\
r_{z\,id}&= \frac{{\sigma _1}\sqrt {{\varepsilon _1} + {\varepsilon _3}} }{(m_i-d)!}\frac{{\rm{d}}^{m_i-d}}{{\rm{d}}s^{m_i-d}}\Big((s-z_{i})^{m_i}
{N_0}(s){H_1}({s})X(s){L^{ - 1}}(s)\Big)\Big|_{s=z_{i}}.\label{eq11}
\end{align}
Based on the Bezout identity (\ref{eq4}) and (\ref{eq11}), we can obtain
\begin{align*}
r_{z\,id}&= \frac{{\sigma _1}\sqrt {{\varepsilon _1} + {\varepsilon _3}} }{(m_i-d)!}\frac{{\rm{d}}^{m_i-d}}{{\rm{d}}s^{m_i-d}}\Big((s-z_{i})^{m_i}{N_0}(s)
{H_1}({s})M^{-1}(s){L^{ - 1}}(s)\Big)\Big|_{s=z_{i}}.
\end{align*}
Therefore, we have
\begin{align}
J_1=&\left\| {{\sigma _1}\sqrt {{\varepsilon _1} + {\varepsilon _3}} {N_0}{H_1}\left( {X - RN} \right)} \right\|_2^2\nonumber\\
=&\| {\sigma _1}\sqrt {{\varepsilon _1} + {\varepsilon _3}} {N_0}{H_1}\left( {XL^{-1} - RN_m} \right) \|_2^2\nonumber\\
=&\Big\| {\sigma _1}\sqrt {{\varepsilon _1} + {\varepsilon _3}} {N_0}{H_1}XL^{-1}
 - {\sigma _1}\sqrt {{\varepsilon _1} + {\varepsilon _3}} {N_0}{H_1}RN_m \Big\|_2^2\nonumber\\
=&\left\| \Gamma_1^{\bot} + \Gamma_1
- {\sigma _1}\sqrt {{\varepsilon _1} + {\varepsilon _3}} {N_0}{H_1}RN_m \right\|_2^2\nonumber\\
=& \left\| {\Gamma _1^ \bot } \right\|_2^2 + \Big\| {{\Gamma _1} - {\sigma _1}\sqrt {{\varepsilon _1} + {\varepsilon _3}} {N_0}{H_1}R{N_m}} \Big\|_2^2\nonumber\\
=&\Big\|\sum\limits_{i = 1}^{N_{z}+N_{f_1}+N_{f_2}}\sum\limits_{d = 1}^{m_{i}} {\frac{{r_{zid}}}{(s - {z_{i}})^d}}\Big\|_2^2
 + \Big\| {{\Gamma _1} - {\sigma _1}\sqrt {{\varepsilon _1} + {\varepsilon _3}} {N_0}{H_1}R{N_m}} \Big\|_2^2\nonumber
\end{align}
\begin{align}
=&\sum\limits_{i = 1}^{N_{z}+N_{f_1}+N_{f_2}}\sum\limits_{d = 1}^{m_{i}} {\frac{r_{zid}}{(d - 1)!}}\sum\limits_{j = 1}^{N_{z}+N_{f_1}+N_{f_2}}\sum\limits_{k = 1}^{m_{j}}
\frac{{\rm{d}}^{d-1}}{{\rm{d}}{s}^{d-1}}{\frac{(-1)^{k-1}{\bar{r}_{zjk}}}{(s + {\bar{z}_{j}})^k}}\Big|_{s=z_i} + \Big\| {\Gamma _1} - {\sigma _1}\sqrt {{\varepsilon _1} + {\varepsilon _3}}
{N_0}{H_1}R{N_m} \Big\|_2^2\label{a10}
\end{align}
For $J_2$, we perform an inner-outer factorization given in
\cite{Vidyasagar}, such that
\begin{align}\label{tj13}
\left( {\begin{array}{*{20}{c}}
{{\sigma _2}\sqrt {{\varepsilon _1} + {\varepsilon _3}} {F_{1o}}(s){N_0}(s){H_2}(s)}\\[5pt]
{{\sigma _1}\sqrt {{\varepsilon _2}} {F_{2o}}(s){N_0}(s){H_1}(s)}\\[5pt]
{{\sigma _2}\sqrt {{\varepsilon _2}} {M_m}(s){H_2}(s)}
\end{array}} \right) = {\Omega_i}{\Omega _0},
\end{align}
where ${\Omega _i}$ and ${\Omega _0}$ are the inner and the outer.
\par
Then, we have
\begin{align*}
J_2&=\left\| {\Omega _i\Omega _0 \left( {Y - RM} \right)} \right\|_2^2\\
   &=\left\| {\Omega _0 \left( {Y - RM} \right)} \right\|_2^2\\
   &=\left\| {\Omega _0 \left( {YB^{-1} - RM_m} \right)} \right\|_2^2.
\end{align*}

Similarly to $(\ref{eq9})$, $\Omega_o (s)Y(s){B^{ - 1}}(s)$ also can be decomposed as
\begin{equation*}
\Omega_o (s)Y(s){B^{ - 1}}(s) = \Gamma _2^ \bot (s) + {\Gamma _2}(s),
\end{equation*}
where ${\Gamma _2}(s) \in {H_2},\Gamma _2^ \bot (s) \in H_2^ \bot $, and
\begin{align}
{\rm{ }}\Gamma _2^ \bot (s)&= \sum\limits_{i = 1}^{N_p}\sum\limits_{d = 1}^{n_{i}} {\frac{{r_{p\,id}}}{(s - {p_{i}})^d}},\nonumber\\
r_{p\,id}&= \frac{1}{(n_i-d)!}\frac{{\rm{d}}^{n_i-d}}{{\rm{d}}s^{n_i-d}}\Big((s-p_{i})^{n_i}
\Omega_o ({s})Y({s}){B^{ - 1}}(s)\Big)\Big|_{s=p_{i}}.\label{eq111}
\end{align}
\par
Based on the Bezout identity (\ref{eq4}) and (\ref{eq111}), we can obtain
\begin{align*}
r_{p\,id}&= \frac{1}{(n_i-d)!}\frac{{\rm{d}}^{n_i-d}}{{\rm{d}}s^{n_i-d}}\Big((s-p_{i})^{n_i}
\Omega_o ({s})N^{-1}(s){B^{ - 1}}(s)\Big)\Big|_{s=p_{i}}.
\end{align*}
Therefore, we have
\begin{align}
J_2=&\left\| {\Omega _0 \left( {YB^{-1} - RM_m} \right)} \right\|_2^2\nonumber\\
=&\left\| \Gamma_2^{\bot} + \Gamma_2 - \Omega _0RM_m \right\|_2^2\nonumber\\
=&\left\| {\Gamma _2^ \bot } \right\|_2^2 + \Big\| {{\Gamma _2} - \Omega_o R{M_m}} \Big\|_2^2\nonumber\\
=&\left\|\sum\limits_{i = 1}^{N_{p}}\sum\limits_{d = 1}^{n_{i}} {\frac{{r_{pid}}}{(s - {p_{i}})^d}}\right\|_2^2 +  \Big\| {{\Gamma _2} - \Omega_o R{M_m}} \Big\|_2^2\nonumber\\
=& \sum\limits_{i = 1}^{N_{p}}\sum\limits_{d = 1}^{n_{i}} {\frac{r_{pid}}{(d - 1)!}}\sum\limits_{j = 1}^{N_{p}}\sum\limits_{k = 1}^{n_{j}} {\frac{{\rm{d}}^{d-1}}{{\rm{d}}{s}^{d-1}}\frac{(-1)^{k-1}{\bar{r}_{pjk}}}{(s + {\bar{p}_{j}})^k}}\Big|_{s=p_i}
 + \Big\| {{\Gamma _2} - \Omega_o R{M_m}} \Big\|_2^2.\label{a11}
\end{align}
\par
Based on the above analysis, we now consider $J_{12}$, noting (\ref{a10},\ref{a11}), we have
\begin{align}
{J_{12}} =&\left\| {\Gamma _1^ \bot } \right\|_2^2 + \Big\| {{\Gamma _1} - {\sigma _1}\sqrt {{\varepsilon _1} + {\varepsilon _3}} {N_0}{H_1}R{N_m}} \Big\|_2^2
 + \left\| {\Gamma _2^ \bot } \right\|_2^2 + \Big\| {{\Gamma _2} - \Omega_o R{M_m}} \Big\|_2^2 \nonumber\\
 =& \left\| {\Gamma _1^ \bot } \right\|_2^2 + \left\| {\Gamma _2^ \bot } \right\|_2^2
 + \left\| {\begin{array}{*{20}{c}}
{{\Gamma _1} - {\sigma _1}\sqrt {{\varepsilon _1} + {\varepsilon _3}} {N_0}{H_1}R{N_m}}\\
{{\Gamma _2} - \Omega_o R{M_m}}
\end{array}} \right\|_2^2\nonumber\\
 =& \left\| \left( {\begin{array}{*{20}{c}}
{{\Gamma _1}}\\
{{\Gamma _2}}
\end{array}} \right)  - \left( {\begin{array}{*{20}{c}}
{{\sigma _1}\sqrt {{\varepsilon _1} + {\varepsilon _3}} {N_0}{H_1}{N_m}}\\
{\Omega_o {M_m}}
\end{array}} \right)R \right\|_2^2
 +\left\| {\Gamma _1^ \bot } \right\|_2^2 + \left\| {\Gamma _2^ \bot } \right\|_2^2.\label{eq13}
\end{align}
Furthermore, we perform an inner-outer factorization, such that
\[
\left( {\begin{array}{*{20}{c}}
{{\sigma _1}\sqrt {{\varepsilon _1} + {\varepsilon _3}} {N_0}{H_1}{N_m}}\\
{\Omega_o {M_m}}
\end{array}} \right) = {\Delta _i}{\Delta _0},
\]
where ${\Delta _i}$ and ${\Delta _0}$ are the inner and the outer.
And, introduce
\[\Psi_1(s)\triangleq \left(\begin{array}{*{20}{c}}
\Delta_i^T(-s)\\
I-\Delta_i(s)\Delta_i^T(-s) \end{array} \right),\]
then, we have $\Psi_1^H(j\omega)\Psi_1(j\omega)=I$.\par
Consequently, from $(\ref{eqq8},\ref{a10},\ref{a11},\ref{eq13})$ , we have
\begin{align}
J_{12}^*=&\mathop {\inf }\limits_{K \in U} ({J_1} + {J_2})\nonumber\\
=& \left\| {\Gamma _1^ \bot } \right\|_2^2 + \left\| {\Gamma _2^ \bot } \right\|_2^2
 + \mathop {\inf }\limits_{K \in U} \left\|\Psi_1\left[{\left( {\begin{array}{*{20}{c}}
{{\Gamma _1}}\nonumber\\
{{\Gamma _2}}
\end{array}} \right) - {\Delta _i}{\Delta _0}R}\right] \right\|_2^2\nonumber\\
=& \left\| {\Gamma _1^ \bot } \right\|_2^2 + \left\| {\Gamma _2^ \bot } \right\|_2^2
+ \mathop {\inf }\limits_{K \in U}\left\| \Delta _i^{H}\left( {\begin{array}{*{20}{c}}
{{\Gamma _1}}\nonumber\\
{{\Gamma _2}}
\end{array}} \right)
- {\Delta _0}R \right\|_2^2
 + \left\|\left(I-\Delta_i\Delta_i^H\right)\left( {\begin{array}{*{20}{c}}
{{\Gamma _1}}\nonumber\\
{{\Gamma _2}}
\end{array}} \right)\right\|_2^2\nonumber
\end{align}
Similarly to $(\ref{eq9})$, $\Delta_i^H[\Gamma_1^H~~\Gamma_2^H]$ can be decomposed as
\begin{align*}
\Delta _i^{H}(s)\left( {\begin{array}{*{20}{c}}
{{\Gamma _1(s)}}\nonumber\\
{{\Gamma _2(s)}}
\end{array}} \right)=\Gamma_3^\bot(s)+\Gamma_3(s)
\end{align*}
we can design
\begin{align}\label{rc}
R=\Delta_0^{-1}\Gamma_3
\end{align}
obviously, $R\in{\mathbb{R}H_\infty}$, and $J_{12}^*$ can be written as
\begin{align}
J_{12}^*=&\left\| {\Gamma _1^ \bot } \right\|_2^2 + \left\| {\Gamma _2^ \bot } \right\|_2^2
+ \left\| {\Gamma _3^ \bot } \right\|_2^2
 + \left\|\left(1-\Delta_i\Delta_i^H\right)\left( {\begin{array}{*{20}{c}}
{{\Gamma _1}}\nonumber\\
{{\Gamma _2}}
\end{array}} \right)\right\|_2^2\nonumber\\
=&\sum\limits_{i = 1}^{N_{z}+N_{f_1}+N_{f_2}}\sum\limits_{d = 1}^{m_{i}} {\frac{r_{zid}}{(d - 1)!}}\sum\limits_{j = 1}^{N_{z}+N_{f_1}+N_{f_2}}
\sum\limits_{k = 1}^{m_{j}}{\frac{{\rm{d}}^{d-1}}{{\rm{d}}{s}^{d-1}}\frac{(-1)^{k-1}{\bar{r}_{zjk}}}{(s + {\bar{z}_{j}})^k}}\Big|_{s=z_i}\\
& + \sum\limits_{i = 1}^{N_{p}}\sum\limits_{d = 1}^{n_{i}} {\frac{r_{pid}}{(d - 1)!}}\sum\limits_{j = 1}^{N_{p}}\sum\limits_{k = 1}^{n_{j}}{\frac{{\rm{d}}^{d-1}}{{\rm{d}}{s}^{d-1}}\frac{(-1)^{k-1}{\bar{r}_{pjk}}}{(s + {\bar{p}_{j}})^k}}\Big|_{s=p_i}\nonumber\\
& + \sum\limits_{i = 1}^{N_{s}}\sum\limits_{d = 1}^{o_{i}} {\frac{r_{sid}}{(d - 1)!}}\sum\limits_{j = 1}^{N_{s}}\sum\limits_{k = 1}^{o_{j}}{\frac{{\rm{d}}^{d-1}}{{\rm{d}}{s}^{d-1}}\frac{(-1)^{k-1}{\bar{r}_{sjk}}}{(s + {\bar{s}_{j}})^k}}\Big|_{s=s_i} + \left\|\left(I-\Delta_i\Delta_i^H\right)\left( {\begin{array}{*{20}{c}}
{{\Gamma _1}}\\
{{\Gamma _2}}
\end{array}} \right)\right\|_2^2.\label{eqq17}
\end{align}
where $s_i\in{\mathbb{C}_+}, (i=1\cdots N_s)$ are the nonminimum phase zeros of
$\Delta_i^H[\Gamma_1^H~\Gamma_2^H]^H$, $o_i$ are multiples of the nonminimum phase zeros $s_i$, and
\begin{align*}
r_{s\,id}&= \frac{1}{(o_i-d)!}\frac{{\rm{d}}^{o_i-d}}{{\rm{d}}s^{o_i-d}}\Big((s-s_{i})^{o_i}
\Delta _i^{H}\left( {\begin{array}{*{20}{c}}
{{\Gamma _1}}\nonumber\\
{{\Gamma _2}}
\end{array}} \right)\Big)\Big|_{s=s_{i}},
\end{align*}
Secondly, ${J_3^*}$ can be calculated as follow:
\begin{align*}
{J_3^*}=&\mathop {\inf }\limits_{Q \in RH_{\infty}} {J_3}\\
=& \mathop {\inf }\limits_{Q \in RH_{\infty}} \left\|\left(\begin{array}{*{20}{c}}I\\0\\0\end{array}\right)+\left( {\begin{array}{*{20}{c}}
{ - {F_1}{\hat{N}}}\\
{\sqrt {\frac{{{\varepsilon _2}}}{{{\varepsilon _1}}}} {M_m}}\\[5pt]
{\sqrt {\frac{{{\varepsilon _3}}}{{{\varepsilon _1}}}} {F_{1_0}}{N_0}}
\end{array}} \right)Q \right\|_2^2{\varepsilon _1}{\sigma_r ^2}
 - {\varepsilon _2}{\Gamma _u} - {\varepsilon _3}{\Gamma _y}
\end{align*}
\begin{align*}
=& \mathop {\inf }\limits_{Q \in RH_{\infty}} \Bigg\| \left( {\begin{array}{*{20}{c}}
{L_{{f_1}}^{ - 1}L_g^{ - 1} - I}\\
0\\
0
\end{array}} \right) + \left( {\begin{array}{*{20}{c}}
I\\
0\\
0
\end{array}} \right)
 + \left( {\begin{array}{*{20}{c}}
{- {F_{{1_0}}}{N_0}}\\
{\sqrt {\frac{{{\varepsilon _2}}}{{{\varepsilon _1}}}} {M_m}}\\
{\sqrt {\frac{{{\varepsilon _3}}}{{{\varepsilon _1}}}} {F_{{1_0}}}{N_0}}
\end{array}} \right)Q \Bigg\|_2^2{\varepsilon _1}{\sigma_r ^2}
- {\varepsilon _2}{\Gamma _u} - {\varepsilon _3}{\Gamma _y}\\
=& \mathop {\inf }\limits_{Q \in RH_{\infty}} \left\|  \left( {\begin{array}{*{20}{c}}
\sqrt{\varepsilon _1}\\
0\\
0
\end{array}} \right)
+ \left({\begin{array}{*{20}{c}}
{ -\sqrt {\varepsilon _1}{F_{{1_0}}}{N_0}}\\
{\sqrt {\varepsilon _2} {M_m}}\\
{\sqrt {\varepsilon _3} {F_{{1_0}}}{N_0}}
\end{array}} \right)Q \right\|_2^2{\sigma_r ^2}
 + \left\| {\begin{array}{*{20}{c}}
{L_{{f_1}}^{ - 1}L_g^{ - 1} - 1}\\
0\\
0
\end{array}} \right\|{\varepsilon _1}{\sigma_r ^2}
 - {\varepsilon _2}{\Gamma _u} - {\varepsilon _3}{\Gamma _y}\\
=& 2{\varepsilon _1}{\sigma_r ^2}\left( \sum\limits_{i = 1}^{N_z+N_{{f_1}}} {{\mathop{\rm Re}\nolimits} \left\{ {{z_{i}}} \right\}} \right) - {\varepsilon _2}{\Gamma _u} - {\varepsilon _3}{\Gamma _y}
 + \mathop {\inf }\limits_{Q \in RH_{\infty}} {\hat{J}_3},
\end{align*}
where
\begin{align*}
{\hat{J}_3}=\left\|  \left( {\begin{array}{*{20}{c}}
\sqrt{\varepsilon _1}\\
0\\
0
\end{array}} \right)
+ \left({\begin{array}{*{20}{c}}
{ -\sqrt {\varepsilon _1}{F_{{1_0}}}{N_0}}\\
{\sqrt {\varepsilon _2} {M_m}}\\
{\sqrt {\varepsilon _3} {F_{{1_0}}}{N_0}}
\end{array}} \right)Q \right\|_2^2{\sigma_r ^2}.
\end{align*}
We introduce an inner-outer factorization such that
\begin{align*}
\left({\begin{array}{*{20}{c}}
{ -\sqrt {\varepsilon _1}{F_{{1_0}}}{N_0}}\\
{\sqrt {\varepsilon _2} {M_m}}\\
{\sqrt {\varepsilon _3} {F_{{1_0}}}{N_0}}
\end{array}} \right)=\Lambda_i\Lambda_0.
\end{align*}
And, introduce
\[\Psi_2(s)\triangleq \left(\begin{array}{*{20}{c}}
\Lambda_i^T(-s)\\
I-\Lambda_i(s)\Lambda_i^T(-s) \end{array} \right),\]
then, we have $\Psi_2^H(j\omega)\Psi_2(j\omega)=I$. It follows that
\begin{align*}
\hat{J}_3^*=&\mathop {\inf }\limits_{K \in U} {\hat{J}_3}\\
=&\mathop {\inf }\limits_{K \in U} \left\|\Lambda_i^H
\left( {\begin{array}{*{20}{c}}
\sqrt{\varepsilon_1}\\
0\\
0
\end{array}} \right)+\Lambda_0Q\right\|_2^2\sigma_r^2
+ \left\|(I-\Lambda_i\Lambda_i^H)\left( {\begin{array}{*{20}{c}}
\sqrt{\varepsilon_1}\\
0\\
0
\end{array}} \right)\right\|_2^2\sigma_r^2\\
=&\mathop {\inf }\limits_{K \in U} \left\|-\sqrt{\varepsilon_1}\Lambda_0^{-H}N_0^HF_{10}^H +\Lambda_0Q\right\|_2^2\sigma_r^2
+\left\|(I-\Lambda_i\Lambda_i^H)\left( {\begin{array}{*{20}{c}}
\sqrt{\varepsilon_1}\\
0\\
0
\end{array}} \right)\right\|_2^2\sigma_r^2.
\end{align*}
We can design
\begin{align}\label{qc}
Q=\sqrt{\varepsilon_1}(\Lambda_0^{H}\Lambda_0)^{-1}N_0^HF_{1_0}^H
\end{align}
obviously, $Q\in{\mathbb{R}H_\infty}$, and we have
\begin{align*}
\hat{J}_3^*=&\left\|\begin{array}{*{20}{c}}
\sqrt{\varepsilon_1}(I-\varepsilon_1F_{1_0}N_0\Lambda_0^{-1}\Lambda_0^{-H}N_0^HF_{1_0}^H)\\
\varepsilon_1\sqrt{\varepsilon_2}M_m\Lambda_0^{-1}\Lambda_0^{-H}N_0^HF_{1_0}^H\\
\varepsilon_1\sqrt{\varepsilon_3}F_{1_0}N_0\Lambda_0^{-1}\Lambda_0^{-H}N_0^HF_{1_0}^H
\end{array}\right\|_2^2\sigma_r^2.
\end{align*}
Therefore,
\begin{align}\label{eqq18}
J_3^*=&2{\varepsilon _1}{\sigma_r ^2}\left( \sum\limits_{i = 1}^{N_z+N_{{f_1}}} {{\mathop{\rm Re}\nolimits} \left\{ {{z_{i}}} \right\}} \right) - {\varepsilon _2}{\Gamma _u} - {\varepsilon _3}{\Gamma _y}+\left\|\begin{array}{*{20}{c}}
\sqrt{\varepsilon_1}(I-\varepsilon_1F_{1_0}N_0\Lambda_0^{-1}\Lambda_0^{-H}N_0^HF_{1_0}^H)\\
\sqrt{\varepsilon_2}M_m\Lambda_0^{-1}\Lambda_0^{-H}N_0^HF_{1_0}^H\\
\sqrt{\varepsilon_3}F_{1_0}N_0\Lambda_0^{-1}\Lambda_0^{-H}N_0^HF_{1_0}^H
\end{array}\right\|_2^2\sigma_r^2.
\end{align}
From (\ref{eqq8}), (\ref{eqq17}) and (\ref{eqq18}), we can obtain $J^*$.
\end{proof}


\begin{remark}
By the given methods in this paper, the quantitative relation between the noise variances and the tracking performance is given by the implicit results in Theorem 1. In up-link and down-link channels, the communication noise is considered simultaneously, the inner-outer factorization is presented in (\ref{tj13}) in order to design the unified controller parameter $R$, which led to only implicit relations about the noise variances and the tracking performance can be obtained.
\end{remark}
\begin{remark}
Theorem 1 assumes that the close-loop system is stable, which implies that the channel input power cannot be too small. Thus, Theorem 1 is deduced on the premise that the channel input power is large enough to ensure the stability of the close-loop system. The following Theorem 2 presents the minimum channel's input power constraints.
\end{remark}
It is known that the channel's input power constraints can't be too small, otherwise the tracking system will be unstable. The estimation of the minimum channel's input power constraints is very important and necessary. We easily obtain the following theorem by proof of the theorem 1. The minimum channel's input power constraints are given in the following theorem.
\begin{theorem}
When ensuring the stability of the system and acquiring system performance limitation, the channel's input power constraints should be satisfied
\begin{align*}
\Gamma_y\geq&\|F_{1_0}N_0(\Lambda_0^H\Lambda_0)^{-1}F_{1_0}^H\|_2^2\sigma_r^2
+\|N_0(XL^{-1}-\Delta_0^{-1}\Gamma_3F_{1_0}F_{2_0}N_0)H\|_2^2\sigma_1^2\\
&+\|F_{1_0}N_0(YB^{-1}-\Delta_0^{-1}\Gamma_3M_m)H_2\|_2^2\sigma_2^2,\\
\Gamma_u\geq&\|M_m(\Lambda_0^H\Lambda_0)^{-1}F_{1_0}^H\|_2^2\sigma_r^2
+\|F_{2_0}N_0(YB^{-1}-\Delta_0^{-1}\Gamma_3M_m)H_1\|_2^2\sigma_1^2\\
&+\|M_m(YB^{-1}-\Delta_0^{-1}\Gamma_3M_m)H_2\|_2^2\sigma_2^2.
\end{align*}
\end{theorem}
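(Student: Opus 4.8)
The plan is to evaluate the two channel powers $E\{\|\mathbf{u}\|_2^2\}$ and $E\{\|\mathbf{y}\|_2^2\}$ at precisely the controller that attains the limit $J^*$ of Theorem \ref{th1}, and then read the two inequalities off the admissibility condition (\ref{eqa1}). First I would return to the closed-loop maps established at the start of that proof. Using $S=M(X-RN)$ together with the Youla parametrization $K_1=(X-RN)^{-1}Q$, $K_2=(X-RN)^{-1}(Y-RM)$ and the factorizations $F_2PF_1=NM^{-1}$, $P=\hat{N}M^{-1}$, the six scalar transfer functions simplify (all factors commuting in the SISO setting) to
\begin{align*}
\mathbf{y}&=F_1\hat{N}Q\,\mathbf{r}+\hat{N}(X-RN)H_1\,\mathbf{n}_1+F_1\hat{N}(Y-RM)H_2\,\mathbf{n}_2,\\
\mathbf{u}&=MQ\,\mathbf{r}+F_2\hat{N}(Y-RM)H_1\,\mathbf{n}_1+M(Y-RM)H_2\,\mathbf{n}_2.
\end{align*}

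Because $\mathbf{r},\mathbf{n}_1,\mathbf{n}_2$ are uncorrelated (Assumption \ref{as2}) and the noises are white with intensities $\sigma_1^2,\sigma_2^2$, Parseval's theorem splits each power into three additive contributions; for the output,
\begin{align*}
E\{\|\mathbf{y}\|_2^2\}=\|F_1\hat{N}Q\|_2^2\sigma_r^2+\|\hat{N}(X-RN)H_1\|_2^2\sigma_1^2+\|F_1\hat{N}(Y-RM)H_2\|_2^2\sigma_2^2,
\end{align*}
and analogously for $E\{\|\mathbf{u}\|_2^2\}$. Into these I would insert the optimizing parameters $R=\Delta_0^{-1}\Gamma_3$ from (\ref{rc}) and $Q=\sqrt{\varepsilon_1}(\Lambda_0^H\Lambda_0)^{-1}N_0^HF_{1_0}^H$ from (\ref{qc}), since exactly these choices realize $J^*$. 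I would then strip the inner/all-pass factors via $\|L\,G\|_2=\|G\|_2$ for any inner $L$, using $\hat{N}=L_gN_0$, $F_1=L_{f_1}F_{1_0}$, $F_2=L_{f_2}F_{2_0}$, $N=LN_m=LF_{1_0}F_{2_0}N_0$, $M=BM_m$, together with the rewritings $(X-RN)=(XL^{-1}-RN_m)L$ and $(Y-RM)=(YB^{-1}-RM_m)B$ already employed in the proof of Theorem \ref{th1}. Each term then collapses to a purely minimum-phase expression, reproducing the stated right-hand sides for $\Gamma_y$ and $\Gamma_u$.

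The argument closes by admissibility: since attaining the limit of Theorem \ref{th1} forces the controller to use precisely this $R$ and $Q$, the power constraint (\ref{eqa1}) can be satisfied by the optimal controller only when $\Gamma_u$ and $\Gamma_y$ are at least the values just computed, so these are the minimal admissible input powers. The main obstacle I expect is the bookkeeping in the two noise channels: unlike the reference term, the substitution $R=\Delta_0^{-1}\Gamma_3$ does not telescope into a closed form, so one must carefully track which all-pass factors cancel in each of the six maps and retain the structured residuals $XL^{-1}-\Delta_0^{-1}\Gamma_3F_{1_0}F_{2_0}N_0$ and $YB^{-1}-\Delta_0^{-1}\Gamma_3M_m$ exactly as they appear in the statement, rather than simplifying them further.
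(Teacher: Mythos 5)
Your proposal is correct and follows essentially the same route as the paper: the paper likewise takes the closed-loop expressions for $\bf y$ and $\bf u$ from (\ref{eq7}), expands each power into the three uncorrelated contributions $\|F_1\hat{N}Q\|_2^2\sigma_r^2+\|\hat{N}(X-RN)H_1\|_2^2\sigma_1^2+\|F_1\hat{N}(Y-RM)H_2\|_2^2\sigma_2^2$ (and analogously for $\bf u$), substitutes the optimizing parameters $R=\Delta_0^{-1}\Gamma_3$ from (\ref{rc}) and $Q$ from (\ref{qc}), and strips the all-pass factors to obtain exactly the stated minimum-phase residuals. The only differences are presentational (you make the Parseval/uncorrelatedness step and the final admissibility argument explicit, which the paper leaves implicit).
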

\begin{proof}
From (\ref{eq7}) and (\ref{rc}), we can obtain
\begin{align*}
\nonumber{E\bf \|y\|_2^2}=& E\|P{\left( {I - {K_2}{F_2}P{F_1}} \right)^{ - 1}}
\left( {{F_1}{K_1}{\bf r} + {H_1}{{\bf n}_1} + {F_1}{K_2}{H_2}{{\bf n}_2}} \right)\|_2^2,\\
=&\nonumber  \left\| {\hat N\left( {X - RN} \right){H_1}{\sigma _1}} \right\|_2^2
 + \Big\| {F_1}\hat N( Y - RM ){H_2}{\sigma _2} \Big\|_2^2  + \Big\| {{F_1}\hat NQ\sigma_r } \Big\|_2^2\\
=&\|F_{1_0}N_0(\Lambda_0^H\Lambda_0)^{-1}F_{1_0}^H\|_2^2\sigma_r^2
+\|N_0(XL^{-1}-\Delta_0^{-1}\Gamma_3F_{1_0}F_{2_0}N_0)H\|_2^2\sigma_1^2\\
&+\|F_{1_0}N_0(YB^{-1}-\Delta_0^{-1}\Gamma_3M_m)H_2\|_2^2\sigma_2^2,
\end{align*}
Similarly, from (\ref{eq7}), (\ref{rc}) and (\ref{qc}), we can obtain
\begin{align*}
\nonumber{E\bf \|u\|_2^2}=& E\|{\left( {I - {K_2}{F_2}P{F_1}} \right)^{ - 1}}
\left( {{K_1}{\bf r} + {K_2}{F_2}P{H_1}{{\bf n}_1} + {K_2}{H_2}{{\bf n}_2}} \right)\|_2^2,\\
=& \left\| {{F_2}\hat N\left( {Y - RM} \right){H_1}{\sigma _1}} \right\|_2^2
 + \Big\| {M\left( {Y - RM} \right){H_2}{\sigma _2}} \Big\|_2^2 + \Big\| {MQ\sigma_r } \Big\|_2^2\\
=&\|M_m(\Lambda_0^H\Lambda_0)^{-1}F_{1_0}^H\|_2^2\sigma_r^2
 +\|F_{2_0}N_0(YB^{-1}-\Delta_0^{-1}\Gamma_3M_m)H_1\|_2^2\sigma_1^2\\
 &+\|M_m(YB^{-1}-\Delta_0^{-1}\Gamma_3M_m)H_2\|_2^2\sigma_2^2.
\end{align*}
\end{proof}

\section{Simulation Studies}\label{se4}
In this section, some examples are given to show the effectiveness of the obtained theoretical result, which is also used to analyze the performance for a real-time random-noise tracking radar system\cite{Cover2012,Zhang2004}. To better present the impact of different channel factors on tracking performance, the up-link channel and down-link channel are considered in Example 1 and Example 2, respectively. Because weight factors $\varepsilon_1, \varepsilon_2$ and $\varepsilon_3$ are used to measure influence level of tracking error, down-link channel, or up-link channel, therefore, conditions of $\varepsilon_2 = 0$ and $\varepsilon_3 = 0$ are considered in Examples 1 and 2, respectively. The minimum channel input power constraints are analyzed in Example 3.
\par
{\bf Example 1:} Consider a continuous plant with its transfer function given by
$$P(s)=\frac{s-k}{(s+1)(s-p)}.$$
Clearly, $P(s)$ is non-minimum phase and unstable for $p>0$ and $k>0$. The following will consider the case of a up-link channel. The LTI filters used to model the finite bandwidth $F_1(s)=1, F_2(s)=f_2/(s+f_2)$ and colored noise $H_1(s)=0, H_2(s)=h_2/(s-h_2)$ of the communication link are both chosen to be low-pass Butterworth filters of order 1. The system output channels power constraint $\Gamma_y=2.5$ and $\varepsilon_2=0$.
\par
Two observations can be obtained from Fig.\ref{Fig.8}, where the optimal performance is plotted with respect to bandwidth of both $F(s)$ and $H(s)$. First, system tracking performance becomes better as the available bandwidth of the communication channel decreases. Secondly, if the noise is colored by a low-pass filter, the decrease of its cutoff frequency would lead to the better tracking performance. Fig.\ref{Fig.4} and Fig.\ref{Fig.5} show the optimal performances plotted with respect to unstable pole $p$ and NMP zero $k$ for different values of $\varepsilon_1$. It can be observed from Fig.\ref{Fig.4} and Fig.\ref{Fig.5} that unstable pole and NMP zero worsen tracking performance like the way demonstrated in
Theorem 1. Besides, Fig.\ref{fig4a} and Fig.\ref{fig5a} show that, when nonminimum phase zero and unstable zero are located closely, the performance will be badly degraded. Additionally, Fig.\ref{fig4b} and Fig.\ref{fig5b} show that, when pole-zero cancellation does not occur, the impact on performance by the unstable pole or NMP zero will become more intense.
Fig.\ref{Fig.6} and Fig.\ref{Fig.7} shows that the reference signal and the noise signal will deteriorate tracking performance.

\begin{figure}[H]
\centering
\begin{minipage}[c]{0.7\textwidth}
\centering
  \includegraphics[height=5.3cm,width=8cm]{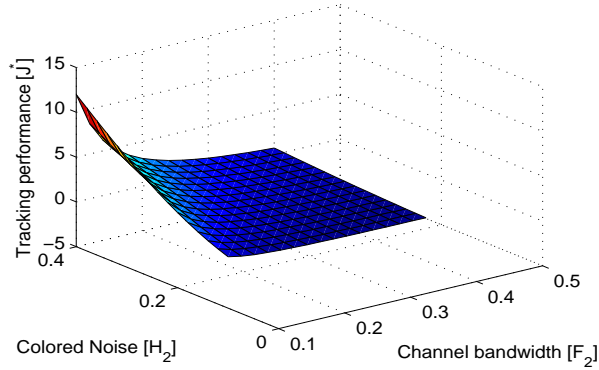}
  \caption{ $J^*$ with respect to channel bandwidth $F_2$ and colored noise $H_2$.  \protect\\ $(k=3, p=2, \sigma_r=0.2, \sigma_2=0.1, \varepsilon_1=0.5,\varepsilon_3=0.5)$}\label{Fig.8}
\end{minipage}
\end{figure}

\begin{figure}[H]
\centering
\subfloat[]{
\label{fig4a}
\begin{minipage}[t]{0.45\textwidth}
\centering
\includegraphics[height=5.3cm,width=8cm]{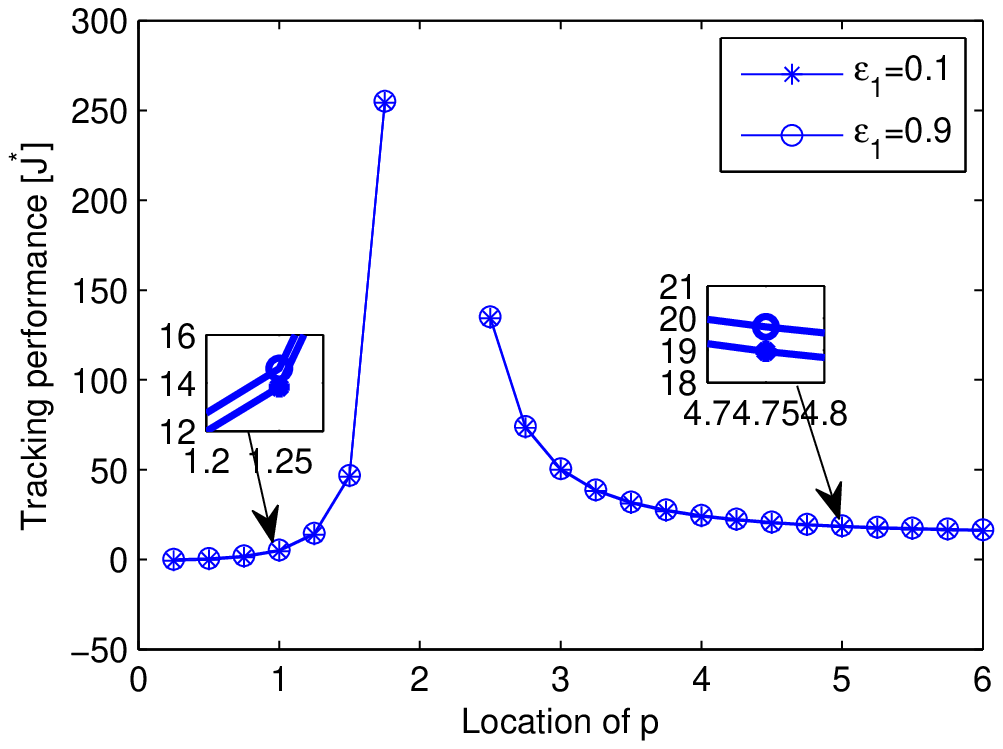}
\end{minipage}
}
\subfloat[]{
\label{fig4b}
\begin{minipage}[t]{0.45\textwidth}
\centering
\includegraphics[height=5.3cm,width=8cm]{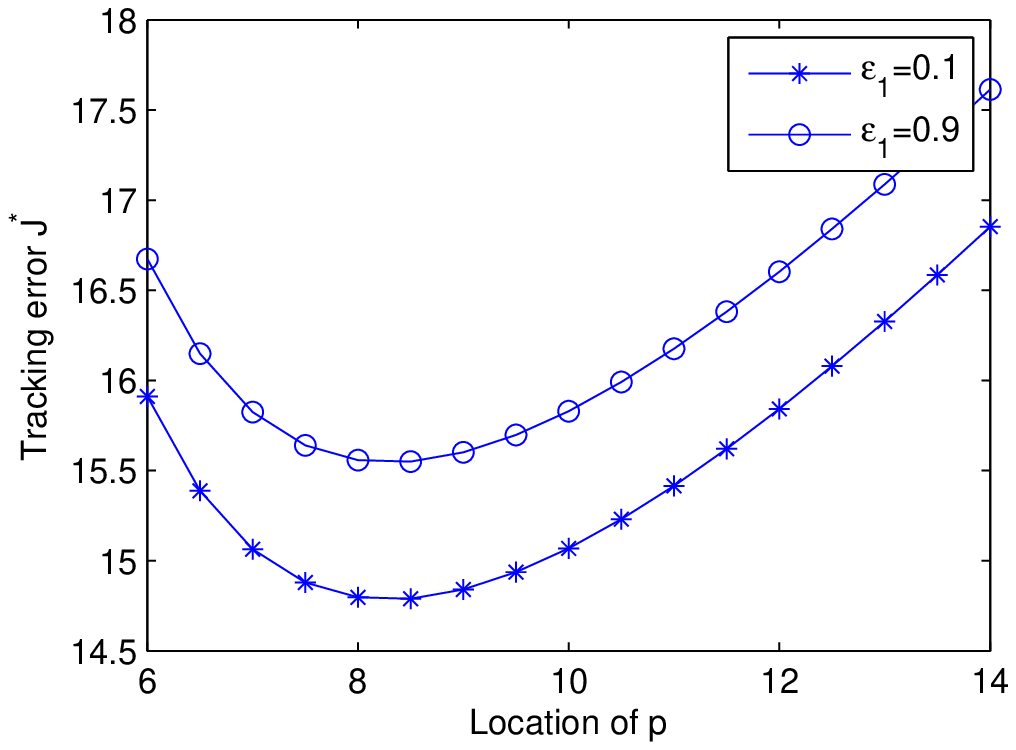}
\end{minipage}
}
\caption{$J^*$ with respect to $p$ for different $\varepsilon_1$ and $\varepsilon_3$.  \protect\\ $(k=2, \sigma_r=0.1, \sigma_2=0.3, f_2=0.1, h_2=0.2, \varepsilon_1+\varepsilon_3=1)$}\label{Fig.4}
\end{figure}

\begin{figure}[H]
\centering
\subfloat[]{
\label{fig5a}
\begin{minipage}[t]{0.45\textwidth}
\centering
\includegraphics[height=5.3cm,width=8cm]{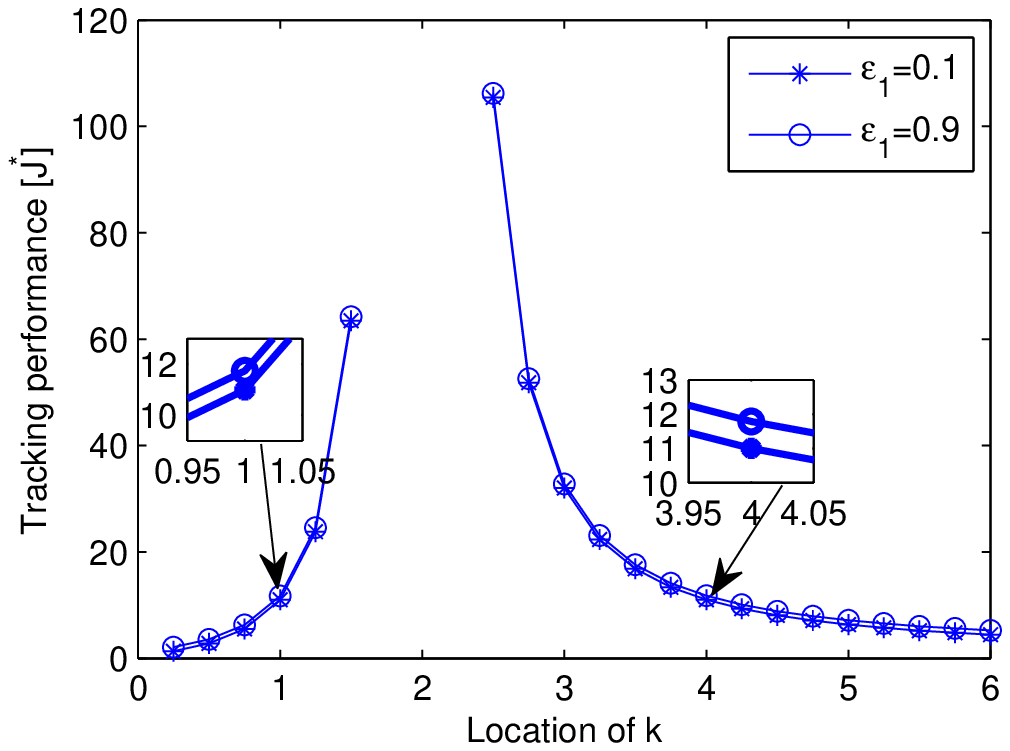}
\end{minipage}
}
\subfloat[]{
\label{fig5b}
\begin{minipage}[t]{0.5\textwidth}
\centering
\includegraphics[height=5.3cm,width=8cm]{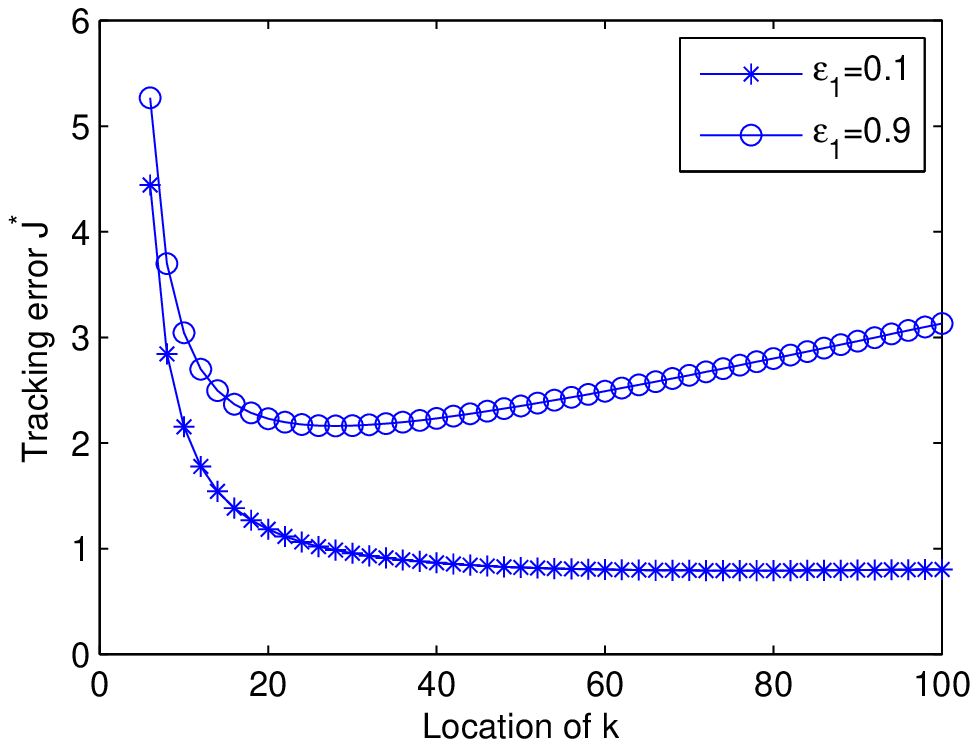}
\end{minipage}
}
\caption{ $J^*$ with respect to $k$ for different $\varepsilon_1$ and $\varepsilon_3$. \protect\\ $(p=2, \sigma_r=0.1, \sigma_2=0.3, f_2=0.1, h_2=0.2, \varepsilon_1+\varepsilon_3=1)$ }\label{Fig.5}
\end{figure}

\begin{figure}[H]
\centering
\begin{minipage}[c]{0.7\textwidth}
\centering
  \includegraphics[height=5.5cm,width=8cm]{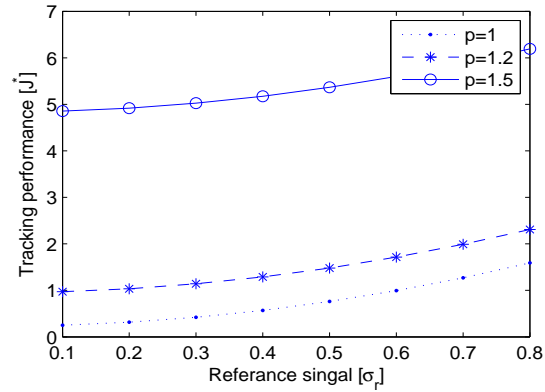}
  \caption{ $J^*$ with respect to $\sigma_r$ for different unstable pole $p$. \protect\\ $(k=2, \sigma_2=0.1, f_2=0.1,$ $h_2=0.2, \varepsilon_1=0.5,\varepsilon_3=0.5)$ }\label{Fig.6}
\end{minipage}
\end{figure}

\begin{figure}[H]
\centering
\begin{minipage}[c]{0.7\textwidth}
\centering
  \includegraphics[height=5.5cm,width=8cm]{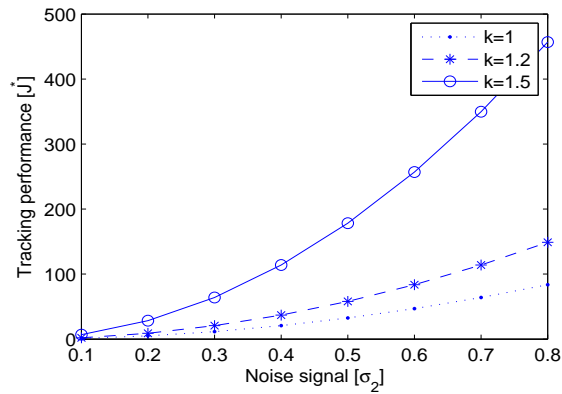}
  \caption{ $J^*$ with respect to $\sigma_2$ for different NMP zero $k$. \protect\\ $(p=2, \sigma_r=0.2, f_2=0.1,$ $h_2=0.2, \varepsilon_1=0.5,\varepsilon_3=0.5)$ }\label{Fig.7}
\end{minipage}
\end{figure}
\par
{\bf Example 2:} The following will consider the case of an down-link channel. The plant is non-minimum phase and unstable with two NMP zeros and two unstable poses. The continuous plant with its transfer function is given by
$$P(s)=\frac{(s-k_1)(s-k_2)}{(s+1)(s-p_1)(s-p_2)}.$$
$P(s)$ is non-minimum phase and unstable for $p_1>0, p_2>0, k_1>0, k_2>0$. The LTI filters used to model the finite bandwidth $F_1(s)=f_1/(s+f_1), F_2(s)=1$ and colored noise $H_1(s)=h_1/(s-h_1), H_2(s)=0$ of the communication link are both chosen to be low-pass Butterworth filters of order 1. The system output channels power constraint $\Gamma_u=2.5$ and $\varepsilon_3=0$. \par
\begin{figure}[H]
\centering
\begin{minipage}[c]{0.7\textwidth}
\centering
  \includegraphics[width=8.4cm,height=6cm]{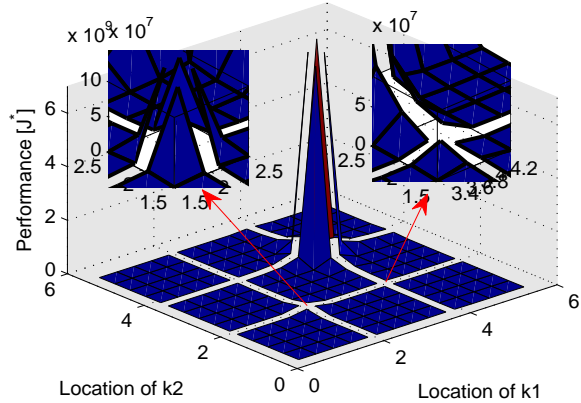}
  \caption{ $J^*$ with respect to NMP zeros $k_1$ and $k_2$.
  $(0.2<k_1,k_2<5.5, p_1=2, p_2=3.8, \sigma_r=0.5, \sigma_1=0.5,
  $ $ \varepsilon_1=0.5, \varepsilon_2=0.5)$}\label{Fig.9}
\end{minipage}
\end{figure}
\begin{figure}[H]
\centering
\begin{minipage}[c]{0.7\textwidth}
\centering
  \includegraphics[width=8.4cm,height=6cm]{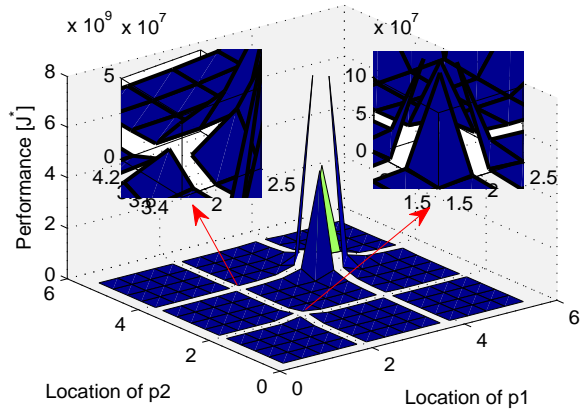}
  \caption{ $J^*$ with respect to unstable poles $p_1$ and $p_2$.
  $(0.2<p_1,p_2<5.5, k_1=2, k_2=3.8, \sigma_r=0.5, \sigma_1=0.5,
   $ $ \varepsilon_1=0.5, \varepsilon_2=0.5)$}\label{Fig.10}
\end{minipage}
\end{figure}

Fig.\ref{Fig.9} and Fig.\ref{Fig.10} can be obtained. Fig.\ref{Fig.9}/Fig.\ref{Fig.10} shows the relationship between the location of the NMP zeros/unstable poles and the optimal tracking performance. Fig.\ref{Fig.10} presents the relationship between the location of the unstable pole and the optimal tracking performance. It can be found that in Fig.\ref{Fig.9} and Fig.\ref{Fig.10}, optimal tracking performance tends to be infinite when pole-zero cancellation takes place. Another phenomenon is revealed in Figs.\ref{Fig.9} and Figs.\ref{Fig.10}, when two poles/two zeros cancellation takes place in the plant, the optimal tracking performance is more severely deteriorated than when one pole-zero cancellation takes place in the plant.

{\bf Example 3:} The following will consider the case of a up-link channel.
The LTI filters used to model the finite bandwidth $F_1(s)=1, F_2(s)=1/(s+1)$ and colored noise $H_1(s)=0, H_2(s)=1/(s-1)$ of the communication link are both chosen to be low-pass Butterworth filters of order 1. And, $p_1=12,p_2=0,k_1=15,k_2=0,\varepsilon_1=0.8,\varepsilon_2=0.2, \varepsilon_3=0, \Gamma_y=1, 0.1<\sigma_r,\sigma_2<0.5$. From Theorem 2, we have $\Gamma_y\geq\max\{{2.88\sigma_2^2+0.64\sigma_r^2}\}$, thus, the performance limitation with channel input constraint can be achieved under $\max\{{2.88\sigma_2^2+0.64\sigma_r^2}\}\leq{1}$. In this case, Fig.\ref{Fig.11} can be obtained.
\par
The relationship among the reference signal, channel noise and the optimal tracking performance is shown in Fig.\ref{Fig.11}. However, owing to the channel input power constraint, the performance limitation can be obtained only in the left part of the Fig.\ref{Fig.11}.

\begin{figure}[H]
\centering
\begin{minipage}[c]{0.8\textwidth}
\centering
  \includegraphics[width=8.4cm,height=6cm]{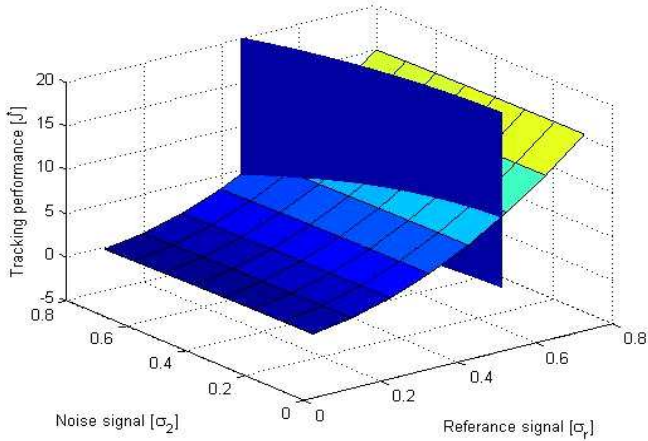}
  \caption{ $J^*$ with respect to reference $r$ and the channel noise $n_2$.}\label{Fig.11}
\end{minipage}
\end{figure}

\section{Conclusions}\label{se5}

In this paper, we have investigated the optimal tracking performance of control systems under up-link and down-link channels with channel input constraint on the power. The limited bandwidth and additive colored Gaussian noise is considered in communication channels. And a two-parameter controller is adopted. We have derived explicit expressions for constrained optimal tracking performance using $\mathcal{H}_2$ optimization techniques. The results show that, the optimal tracking performance depends on characteristics of the system and the up-link and down-link channels. Furthermore, due to the existence of the network, the best achievable tracking performance will also be adversely affected by the limited bandwidth, the input power constraints and additive colored Gaussian noises of the communication channel. Additionally, the channel minimal input power constraints are given under the condition ensuring the stability of the system and acquiring system performance limitation. Besides, some simulation results are given to illustrate the obtained results.
\par
The current work can be extended to deal with the performance issues over more complex network environment. Although much more complicated, it is interesting to derive similar results for multivariable plants with wireless networks in up/down link channels.
\medskip

\end{document}